\documentclass{article}

\usepackage{microtype}
\usepackage{graphicx}
\usepackage{subcaption}
\usepackage{booktabs} 

\usepackage{hyperref}
\usepackage{xcolor}
\usepackage{minted}

\usepackage[preprint]{icml2026}

\usepackage{amsmath}
\usepackage{amssymb}
\usepackage{mathtools}
\usepackage{amsthm}

\theoremstyle{plain}
\newtheorem{theorem}{Theorem}[section]

\theoremstyle{definition}
\newtheorem{definition}{Definition}

\theoremstyle{remark}

\usepackage[textsize=tiny]{todonotes}

\usepackage{mathrsfs}
\usepackage{bbm}
\usepackage{bm}

\usepackage{amsmath,amsfonts,bm}

\def\eqref#1{equation~\ref{#1}}

\def\1{\bm{1}}

\DeclareMathAlphabet{\mathsfit}{\encodingdefault}{\sfdefault}{m}{sl}
\SetMathAlphabet{\mathsfit}{bold}{\encodingdefault}{\sfdefault}{bx}{n}

\newcommand{\KL}{D_{\mathrm{KL}}}

\DeclareMathOperator*{\argmin}{arg\,min}

\usepackage{amsthm}
\usepackage{mathtools}

\DeclarePairedDelimiterX{\infdivx}[2]{[}{]}{%
  #1\;\delimsize\|\;#2%
}
\DeclarePairedDelimiterX{\infdivmi}[2]{[}{]}{%
  #1\;;\;#2%
}
\newcommand{\MI}{I\infdivmi} 
\newcommand{\KLD}{\KL\infdivx} 
\newcommand{\TV}{D_{\mathrm{TV}}}
\newcommand{\TVD}{\TV\infdivx}

\DeclarePairedDelimiter{\norm}{\lVert}{\rVert}
\DeclarePairedDelimiter{\abs}{\lvert}{\rvert}
\DeclarePairedDelimiterX{\innerProd}[2]{\langle}{\rangle}{%
    #1,#2%
}

\def\Normal{\mathcal{N}}
\def\Oh{\mathcal{O}}

\newcommand{\Nats}{\mathbb{N}}

\newcommand{\Prob}{\mathbb{P}}

\newcommand{\Ind}{\mathbf{1}}
\newcommand{\Unif}{\mathrm{Unif}}
\newcommand{\Exp}{\mathbb{E}}

\newcommand{\Laplace}{\mathcal{L}}

\DeclareMathOperator{\enc}{\mathrm{enc}}
\DeclareMathOperator{\dec}{\mathrm{dec}}

\newcommand{\XSpace}{\mathcal{X}}

\newcommand{\Exponential}{\mathrm{Exp}}

\usepackage[capitalize,noabbrev]{cleveref}

\usepackage{tikzscale}
\usepackage{pgfplots}
\pgfplotsset{compat=1.18}
\usepgfplotslibrary{fillbetween}
\usepgfplotslibrary{groupplots}
\usetikzlibrary{arrows,positioning, calc}
\usetikzlibrary {arrows.meta} 

\definecolor{tabblue}{HTML}{1f77b4}
\definecolor{taborange}{HTML}{ff7f0e}
\definecolor{tabgreen}{HTML}{2ca02c}
\definecolor{tabred}{HTML}{d62728}
\definecolor{tabpurple}{HTML}{9467bd}
\definecolor{tabbrown}{HTML}{8c564b}
\definecolor{tabpink}{HTML}{e377c2}
\definecolor{tabgray}{HTML}{7f7f7f}
\definecolor{tabolive}{HTML}{bcbd22}
\definecolor{tabcyan}{HTML}{17becf}
\definecolor{deepblue}{HTML}{0000ff}

\definecolor{colDiffc}{HTML}{1f77b4}
\definecolor{colDpDiffc}{HTML}{ff7f0e}
\definecolor{colLaplace}{HTML}{2ca02c}

\pgfplotsset{
  urp/.style={
    grid=major,
    grid style={dotted, gray!40},
    legend style={font=\small, cells={anchor=west}},
    label style={font=\small},
    tick label style={font=\footnotesize},
    width=5cm, height=3.2cm,
  },
  diffc/.style    = {colDiffc,   mark=*,         thick, mark size=1.5pt},
  dpdiffc/.style  = {colDpDiffc, mark=square*,   thick, mark size=1.5pt},
  laplace/.style  = {colLaplace, mark=triangle*,  thick, mark size=1.5pt},
  refline/.style  = {black, dashed, thick, samples=2, forget plot},
}

\icmltitlerunning{Scalable Differentially Private Data Compression via Diffusion and Stochastic Codes}

\begin{document}

\twocolumn[
  \icmltitle{Scalable Differentially Private Data Compression via \\
Diffusion and Stochastic Codes}

  \icmlsetsymbol{equal}{*}

  \begin{icmlauthorlist}
    \icmlauthor{Gergely Flamich}{imperial}
    \icmlauthor{{\"O}yk{\"u} S{\i}la G{\"u}ner}{imperial}
    \icmlauthor{Yanxiao Liu}{imperial}
    \icmlauthor{Deniz G{\"u}nd{\"u}z}{imperial}
  \end{icmlauthorlist}

  \icmlaffiliation{imperial}{Imperial College London, London, UK}

  \icmlcorrespondingauthor{Gergely Flamich}{g.flamich@imperial.ac.uk}

  \icmlkeywords{diffusion model, stochastic code, data compression, differential privacy, relative entropy coding}

  \vskip 0.3in
]



\printAffiliationsAndNotice{}  

\begin{abstract}
The ever-increasing collection of personal data has created mounting pressure to develop technologies that protect sensitive aspects of individual identity. 
Differential privacy (DP) provides a principled framework with strong formal guarantees and has already achieved practical success. 
However, releasing high-dimensional data, such as images, has remained elusive: releasing uncompressed privatized data requires significant storage. 
At the same time, no effective data compression scheme exists that can compress high-resolution data with privacy guarantees.
\par
We address this challenge with \textit{DP-DiPP}, a compression pipeline that combines stochastic codes with diffusion models. 
DP-DiPP is highly flexible: the practitioner has direct control over the compression rate-privacy-utility tradeoff.
As the theoretical backbone, we extend the Poisson private representation (PPR) of \citep{liu2024universal} to encode the outputs of privacy mechanisms.
We then combine it with DiffC, a diffusion-based lossy data compression method, to obtain a differentially private image compressor.
Our experiments on privatized image classification on CIFAR-10 demonstrate that DP-DiPP significantly outperforms the baseline, achieving a 10-30 times better compression while retaining comparable privacy guarantees and utility.
\end{abstract}
\section{Introduction}
\par
Over the past decades, the rapid growth in data collection and its use in downstream systems such as machine learning has produced undeniable benefits. 
Yet, this has also raised serious concerns about the identifiability of sensitive information and individual privacy.
Therefore, developing frameworks and algorithms that protect privacy is essential to ensure user trust. 
In this paper, we adopt \textit{differential privacy} (DP) as the privacy framework, a stronger variant of DP \citep[DP;][]{dwork2006calibrating, dwork2014algorithmic}.
DP has already found practical success, such as for collecting user data \citep{erlingsson2014rappor}, for releasing privatized tabular data, like census data \citep{bureau2021disclosure}, and for preventing generative models from memorizing and revealing their training sets \citep{liu2023differentially}.
\par
Local DP (LDP) protects individuals' data by requiring that a system's output reveal only limited information about its input.
In practice, this idea translates to randomizing the sensitive data $X$ before revealing it to an untrusted party, according to a \emph{privacy mechanism}.  
For example, if $X$ represents location data, such as GPS coordinates~\citep{andres2013geo}, one possible privacy mechanism is to add appropriately calibrated noise $\eta$ to the coordinates to prevent precise localization, that is, $Y = X + \eta$.
Of course, we still wish to use the privatized data for some purpose. Hence, a natural tradeoff arises between the privatized data's \textit{utility} for its downstream task and its privacy guarantee.
\par
However, despite its success for tabular data, scaling DP to high-dimensional data such as images, audio, or video remains challenging.
This difficulty stems from several factors.
First, directly releasing privatized high-dimensional data can incur substantial storage and communication costs.
Second, the noise required to provide meaningful privacy guarantees can severely degrade the data's utility in high-dimensional domains.
To the best of our knowledge, prior work has not systematically studied high-resolution privatized data compression, where the goal is to jointly preserve privacy, reduce storage cost, and retain downstream utility.
This is the challenge we tackle in this paper.
\par
The theoretical foundation of our work is in the emerging field of data compression with stochastic codes \citep{flamich2024data,li2024channel,flamich2026stochastic}.
In practice, stochastic codes serve as a \textit{randomized lossy compression mechanism}: for example, consider an input $X$ (modeled as a random variable) and a perturbed, lossy version $Y = X + \eta$, where $\eta$ is independent of $X$.
Then, for a given $X$, stochastic codes produce a bitstring $S$ that encodes one sample of $Y \mid X$, such that the expected length of $S$ is as small as possible.
By utilizing stochastic codes, the generated sample exactly follows the conditional distribution.
This is in sharp contrast to traditional lossy compression, where we first construct a quantizer and then attempt to determine the quantizer's error distribution. 
This flexibility of stochastic codes naturally suggests their application to compressing the output of privacy mechanisms, where the exactness of sample distribution preserves other desirable statistical properties of the original privacy mechanism, such as unbiasedness and Gaussianity.
Unfortunately, as we explain in \Cref{sec:compressing_dp}, simply encoding the outputs of privacy mechanisms with stochastic codes can lose the privacy guarantees of the original mechanism; hence, we need to design special-purpose codes that preserve the guarantees.
\par
As a solution, we use the Poisson private representation \citep[PPR;][]{liu2024universal} as a starting point for our work, a stochastic code that \textit{does} have LDP guarantees. 
Concretely, we develop an approximate version (called ``step-limited'' in this paper)  of PPR that encodes any pure $\epsilon$-LDP mechanism, ensuring that its output is at least $2\alpha\epsilon$-LDP, where $\alpha > 1$ is a tunable parameter that interpolates between the code's privacy and the expected codelength.
Unfortunately, applying PPR to complex data is challenging due to its generally very slow runtime, scaling super-exponentially with the information content.
To circumvent this issue, we combine our extended PPR code with DiffC, a variable-rate, diffusion-based data compression framework \citep{theis2022lossy}.
DiffC progressively encodes each denoising step of a Gaussian diffusion model, thereby decomposing the compression problem into smaller, manageable steps.
However, additive Gaussian noise never yields a pure $\epsilon$-LDP mechanism.
Hence, we approximate the Gaussian denoising targets at each step with moment-matched Laplace distributions, which do provide pure $\epsilon$-LDP guarantees, allowing us to apply our approximate PPR code.
Somewhat surprisingly, we empirically find that this approximation does not degrade the denoiser's performance, while significantly strengthening its formal privacy guarantee.
Thus, we call our method \textit{Differentially Private DiffC with Poisson Processes} (DP-DiPP, pronounced \textit{dippy-dip}).%
\par
Put succinctly, DP-DiPP jointly compresses and privatizes the data, rather than privatizing the data first and then compressing it losslessly.
Thus, to validate our method, we conduct experiments on privatized image compression and classification on CIFAR-10 and compare it with the naive method.
We find that DP-DiPP performs significantly better, consistently reducing bitrate by a factor of 10-30 while maintaining a comparable pure LDP guarantee and utility (as measured by classification accuracy).
\par
In summary, our main contributions are as follows: 
\begin{itemize}
\item In \Cref{sec:approximate_ppr}, we extend the Poisson private representation (PPR) of \citet{liu2024universal} to approximate sampling, and analyze its output sample quality and its DP guarantees.
Notably, we show that it yields a $2\alpha\epsilon$-LDP certificate when used to encode any pure $\epsilon$-LDP mechanism, where $\alpha \geq 1$ is a tunable parameter that trades off privacy and approximation quality.
\item In \Cref{sec:private_diffc}, we modify DiffC, a variable-rate, Gaussian diffusion-based lossy data compression method, by replacing the Gaussian targets in the denoising process with Laplace distributions, which turns the reverse process into a pure $\epsilon$-LDP mechanism.
Surprisingly, we show that this approximation does not significantly degrade the denoiser's performance.
\item In \Cref{sec:experiments}, we apply our approximate PPR code to the modified DiffC process, yielding our proposed method, \textit{DP-DiPP}, and validate its performance on a privatized image compression task on the CIFAR-10 dataset. 
Our results show that DP-DiPP consistently reduces bitrate by a factor of 10-30 while maintaining a comparable pure LDP guarantee and utility (as measured by classification accuracy).
\end{itemize}
\subsection{Notation}
\par
We denote the binary logarithm as $\log$.
For a finite set $A$, we denote the number of its elements as $\abs{A}$, for a string $s \in \{0, 1\}^*$, we denote its length by $\abs{s}$.
$\Ind[\cdot]$ denotes the indicator function.
For a real-valued function $f$, we denote its supremum norm as $\norm{f}_\infty = \sup_x \abs{f(x)}$.
For two random variables $ X$ and $ Y$ and a probability distribution $P$, we write $X \sim P$ to mean that $X$ has distribution $P$, and write $X \sim Y$ to mean that the variables are equal in distribution.
We denote uniform distributions over the interval $(a, b)$ as $\Unif(a, b)$, 
exponential distributions with rate $\lambda$ as $\Exponential(\lambda)$ and normal distributions with mean $\mu$ and variance $\sigma^2$ as $\Normal(\mu, \sigma^2)$.
Furthermore, we write $X \perp Y$ to denote the independence of $X$ and $Y$.
For two probability distributions $Q$ and $P$, we denote the relative entropy/Kullback-Leibler divergence of $Q$ from $P$ as $\KLD{Q}{P}$ (in bits) and their total variation distance as $\TVD{Q}{P}$.
For random variables $X, Y$, we denote their mutual information as $\MI{X}{Y}$ (in bits).
Finally, $\Oh(\cdot)$ is the standard big O notation.
\section{Background}
\subsection{Privacy Mechanisms and Differential Privacy}

Differential privacy (DP) provides formal certificates that quantify how much information a system reveals about its input \citep{dwork2006calibrating,dwork2014algorithmic}.
In this paper, we concern ourselves with its strict variant, local DP (LDP): we take some sensitive input $x$ and randomize it using a \textit{privacy mechanism} $M(x)$ to reduce the information leaked by $x$.
However, not any randomization will do: 
in $\varepsilon$-LDP we choose a privacy parameter $\varepsilon \geq 0$ and require the mechanism $M$ to satisfy, for any two inputs $x,x'$ and any measurable event $S$,
\[
\Prob[M(x) \in S] \leq e^\varepsilon \Prob[M(x') \in S].
\]
It can be understood as ensuring a certain level of difficulty in distinguishing whether the output of $M$ was generated from input $x$ or input $x'$, for example, when using a likelihood ratio test as the hypothesis test.

\par
Naturally, in all cases, we wish to use the data for some purpose; otherwise, we might as well replace it with pure noise to achieve perfect privacy.
Thus, we measure the usefulness of the released data through some \textit{utility function} $u$, and we are interested in maximizing the expected utility $\Exp[u(M(X))]$ under a given privacy level $\epsilon$.
Sadly, in virtually all cases of practical interest, maintaining privacy comes at the cost of utility; hence, we seek to characterize the system's \textit{utility-privacy tradeoff}~\cite{chen2020breaking}.
\subsection{Compressing Differential Privacy Mechanisms}
\label{sec:compressing_dp}
When the sensitive data is scalar-valued or low-dimensional, it is usually feasible to store or transmit it as-is, uncompressed.
However, once the size or dimensionality of the data exceeds a reasonable threshold, such as for image, audio, or video data, compression becomes necessary.
\par
The straightforward solution is to apply the mechanism $M$ first and then losslessly compress the privatized output using, e.g., \texttt{zip} for general data or \texttt{PNG} for images.
We consider lossless compression only, as using a lossy compressor could further compromise the utility of the privatized data in ways that are difficult to control.
Denoting a lossless compressor as $C$, the compressed, privatized code of the data $x$ is $C(M(x))$.
A benefit of this solution is that by the \textit{post processing} property of DP mechanisms \citep{dwork2014algorithmic}, if $M$ is $\varepsilon$-LDP, then so is $C(M(x))$.
\par
However, in many practical situations, the privatize-then-compress solution quickly reaches a limit: since privacy mechanisms randomize the data, they usually introduce new, albeit random structure.
As an extreme example, consider perfectly privatized image compression: we completely uniformly randomize the pixels of the image. 
In this case, all information is lost, but losslessly compressing each purely random pixel still requires the full bit depth of the image. 
This is sometimes referred to as the ``communication-privacy-accuracy trilemma''~\citep{chen2020breaking}. 
This counterintuitive state of affairs is essentially conclusively resolved by the emerging field of stochastic codes \citep{flamich2026stochastic}, which we discuss next.
Additionally, if the privacy mechanisms rely on continuous noise, such as Gaussian or Laplace noise, it is impossible to compress the privatized data losslessly; see~\citep{shahmiri2024communication} for a resolution of this issue for Laplace noise.
\par
The inefficiency of the compress-then-privatize approach stems from its sequential structure: the compressor only ever handles the randomized data.
In contrast, using stochastic codes allows us to \textit{jointly privatize and compress} the data, leading to significant improvements in compression rate.
Formally, given a conditional distribution $P_{M \mid X}$, a stochastic code is a triplet $(Z, \enc, \dec)$ made up of:%
\begin{itemize}%
\item A random variable $Z$ independent of the input $X$, shared between the sender/compressor and the receiver/decompressor, and hence called the \textit{common randomness}.
In practice, $Z$ can be realized by using a shared pseudo-random number generator (PRNG) and a common seed for the PRNG.
\item A function $\enc(x, z)$ mapping an input $x$ and a realization of $Z$ to a finite-length bitstring, called the \textit{encoder}.
\item A function $\dec(s, z)$ mapping a bitstring $s$ and a realization of $Z$ to the support of $M$, called the \textit{decoder}.
\end{itemize}%
Finally, to qualify as a stochastic code for $P_{M \mid X}$, the triplet must satisfy the requirement that for all $x$ we have
\begin{align}
\label{eq:stochastic_code_correctness}
\dec(\enc(x, Z), Z) \sim P_{M \mid X = x}
\end{align}
Note that in the above, the randomness arises purely from the synchronized random variable $Z$.
Intuitively, one can think of $s = \enc(x, Z)$ as encoding a single random sample from $P_{M \mid X}$.
Somewhat surprisingly, it can be shown that not only can we construct a stochastic code for virtually any conditional distribution $P_{M \mid X}$, but we can do so such that the expected codelength $\Exp[\abs{\enc(X, Z)}]$ is close to the mutual information $\MI{M}{X}$, which is optimal  \citep{li2018strong,li2024channel,flamich2024data}.
\par
As we can construct a stochastic code to compress a sample from an arbitrary conditional distribution, and privacy mechanisms may be conceived as conditional distributions, it is a natural next step to combine the two.
However, things become more involved regarding the privacy certificate: with stochastic codes, we have the Markov chain 
\begin{align}
X \to \enc(X, Z) \to \underbrace{\dec(\enc(X, Z), Z)}_{\sim M(X)}   
\end{align}
as opposed to ${X \to M(X) \to C(M(X))}$ as we had with the privatize-then-compress solution.
This observation shows that even if $M(X)$ is $\varepsilon$-LDP, post processing does not apply and $\enc(X, Z)$ is not immediately $\varepsilon$-LDP. Instead, we need to take special care in constructing $(Z, \enc, \dec)$ so that it can retain (some of) the privacy guarantees of the mechanism $M$ that it is compressing.
\par
The first works to construct such stochastic codes were
\citet{feldman2021lossless,triastcyn2021dp} and \citet{shah2022optimal}, however, they have suboptimal compression-privacy performance.
Instead, we base our work on the Poisson private representation \citep[PPR;][]{liu2024universal}, the only known stochastic code that exactly simulates arbitrary differential privacy mechanisms with optimal expected codelength. 
%
\subsection{The Poisson Private Representation}
\label{sec:ppr}
\par
As we outlined above, to define the PPR code for an $\epsilon$-LDP mechanism $M$ implementing the conditional distribution $P_{M \mid X}$, we need to specify the triplet $(Z, \enc, \dec)$ that satisfies \Cref{eq:stochastic_code_correctness}.
To begin, we assume that there is a marginal distribution $Q_M$ over the support of $M$ such that $P_{M \mid X = x} \ll Q_M$, i.e., the support of $P_{M \mid X = x}$ is contained in the support of $Q_M$ for all $x$.
Furthermore, we assume we can easily generate samples from $Q_M$ and that we can compute the density ratio (technically, the Radon-Nikodym derivative) $\frac{dP_{M \mid X}}{dQ_M}(m \mid x)$.
In our application to diffusion models, $Q_M$ will be either a Gaussian or Laplace distribution independent of $X$ while $P_{M \mid X}$ will be a Gaussian or Laplace distribution that does depend on $X$, and the above conditions will be satisfied.
Thus, we now define PPR.
%
\begin{definition}[Poisson Private Representation \citep{liu2024universal}]
\label{def:ppr}
Let $P_{M \mid X}$ be an $\epsilon$-LDP mechanism and let $Q_M$ be a probability distribution such that its support contains the support of $P_{M \mid X = x}$ for all $x$.
For convenience, let ${r(m \mid x) = \frac{dP_{M \mid X}}{dQ_M}(m \mid x)}$.
Now, let $n \in \Nats$ and $\alpha > 1$. 
Then, we define the PPR code as follows: 
%
\begin{itemize}
\item We set the common randomness as an infinite sequence of pairs of random variables ${Z \gets \{(M_k, T_k)\}_{k = 1}^\infty}$, where $M_k \sim Q_M$ and $(T_{k} - T_{k - 1}) \sim \Exponential(1)$ with the convention that $T_0 = 0$.
Such an infinite sequence is called a \textit{Poisson process}, the namesake of PPR.
\item We define the encoder in two steps.
First, using random variables $V = \{V_k\}_{k = 1}^\infty$ where $V_k \sim \Exponential(1)$ are not known to the decoder, the encoder computes
\begin{align}
\label{eq:ppr_index}
K(X, Z, V) = \argmin_{k \in \Nats} V_k \cdot T_k^\alpha \cdot r(M_k \mid X)^{-\alpha} 
\end{align}
Although the minimization is over an infinite set, it is nonetheless possible to find a minimizer in finite time. 
As $K$ is an integer, we need to map it to a bitstring to get a valid stochastic code.
To this end, we construct an arithmetic code $\zeta(k)$ from a zeta distribution, %
so that setting $\enc(X, Z, V) = \zeta(K(X, Z, V))$ and $\ell = \KLD{P_{M \mid X}}{Q_M} + \log(3.56)/\min\{(\alpha - 1)/2, 1\}$, 
\begin{align}
\label{eq:ppr_codelength}
\Exp[\abs{\enc(X, Z)}] \leq \ell + \log(\ell + 1) + 2 
\end{align}
For the precise encoding details, see~\citet{liu2024universal}.
\item Finally, the decoder simply recovers the selected index $K$ from the $\zeta$-code and then selects the sample $M_K$ that the encoder computed from $Z = \{(M_i, T_i)\}_{i = 1}^\infty$:
\begin{align}
\dec(S, Z) \gets M_{\zeta^{-1}(S)}
\end{align}
\end{itemize}
\end{definition}
The rationale for the above definition of the PPR code is threefold: first, one can show that given our selection rule in \Cref{eq:ppr_index}, we get $M_{K(x, Z, V)} \sim P_{M \mid X = x}$, and hence $\dec(\enc(x, Z), Z) \sim P_{M \mid X = x}$ as required.
Second, the expected codelength (\Cref{eq:ppr_codelength}) for any fixed $\alpha$ is close to optimal \citep{liu2024universal,li2024channel}. 
Third, Theorem 4.5 of \citet{liu2024universal} shows that when the mechanism $M(x)$ is $\epsilon$-LDP, then $\enc(x, Z)$ is $2\alpha\epsilon$-LDP, and hence, so is $\dec(\enc(X, Z), Z)$ by the post processing property. 
\par
Finally, we must deal with the runtime $N(X, Z, V)$ of PPR.
It can be shown that PPR belongs to the class of \textit{selection samplers} \citep{flamich2024some}, and thus, unfortunately, its expected runtime upon input $x$ is at least $\Exp[N(x, Z, V)] \geq \norm{r(\cdot \mid x)}_\infty$.
While this quantity can generally be quite high, it is especially problematic for our intended application to diffusion models, as in this case, the expected runtime can be infinite because the density ratios involved can be unbounded!
Thus, as our first contribution, in \Cref{sec:approximate_ppr}, we shall develop a variant of PPR that can compress privacy mechanisms with unbounded density ratios at the cost of encoding them only approximately.
However, as our \Cref{thm:approximate_ppr_properties} shows, the error of approximate PPR's output is small in total variation, and the algorithm is guaranteed to have finite runtime, a short codelength, and to retain its desirable privacy certificate.
Indeed, our algorithm will have fixed runtime, which has the further benefit that we can parallelise the most expensive computations, which further improves the runtime.
\par
Unfortunately, for the approximation guarantee to hold, the runtime has to be $\Oh(\exp(\KLD{P_{M \mid X}}{Q_M}))$, which is still prohibitively expensive for any practical privacy mechanism.
Thus, finally, we describe DiffC, a framework that enables scaling stochastic codes to high-resolution data by combining them with diffusion models.
\subsection{Data Compression with Diffusion Models}
\label{sec:diffc}
\par
A diffusion denoising probabilistic model \citep[DDPM;][]{ho2020denoising, sohl2015deep} consists of a fixed forward noising process and a learned reverse denoising process. The forward process gradually corrupts a clean sample $x_0 \sim q(x_0)$ with Gaussian noise so that, at any time-step $t \in [0, T]$, the marginal admits the closed form
\begin{align*}
x_t = \gamma_t x_0 + \sigma_t \eta_t, \quad \eta_t \sim \Normal(0, \mathrm{I}),
\end{align*}
where $\gamma_t = \sqrt{1-\sigma^2_t}$ and $\sigma^2_t$ is a monotonically increasing \emph{noise schedule} with $\sigma_0 \approx 0$ and $\sigma_T \approx 1$. We write $\mathrm{SNR}(t) = \gamma^2_t/\sigma^2_t$ for the signal-to-noise ratio at time-step $t$. The reverse process is parametrized as a Markov chain $p_\theta (x_s \mid x_t)$ trained to minimize the per-step KL divergence ${\KLD{q (x_s \mid x_t)}{ p_\theta (x_s \mid x_t)}}$ for ${s < t}$. 
Note that this objective is equivalent to minimizing ${\KLD{q (x_s \mid x_t, x_0)}{ p_\theta (x_s \mid x_t)}}$, where conditioning on $x_0$ yields a Gaussian posterior with closed form 
\begin{align}
q(x_s \mid x_t, x_0) &= \Normal(\mu_q(x_t, x_0; s, t), \sigma^2(s, t) \mathrm{I}) \label{eq:diffusion_conditional_posterior}\\
\mu_q(x_t, x_0; s, t) &= \frac{\gamma_s \sigma^2_{t\mid s}}{\sigma^2_t} x_0 + \frac{\gamma_{t\mid s} \sigma^2_s}{\sigma^2_t} x_t 
\end{align}%
where $\gamma_{t\mid s} = \frac{\gamma_t}{\gamma_s},\, \sigma^2_{t\mid s} = \sigma^2_{t} - \gamma^2_{t\mid s} \sigma^2_{s}$, and $\sigma(s, t) = \frac{\sigma_{t\mid s} \sigma_s }{\sigma_t}$.
\par
\citet{theis2022lossy} observed that the forward and reverse processes together admit a natural progressive lossy compression interpretation:
We can use a stochastic code to encode a sample $x_s$ from $q(x_s\mid x_t, x_0)$ using the model proposal $p_\theta(x_s\mid x_t)$, terminating at any intermediate timestep to trade off bitrate for distortion.
\citet{theis2022lossy} called this framework \textit{DiffC} and showed that it possesses surprisingly favorable rate-distortion performance, but did not implement it. 
Indeed, it was first implemented by \citet{vonderfecht2025lossy} using latent diffusion models, including Stable Diffusion 1.5/2.1/XL and FLUX.1-dev \citep{esser2024scaling}. 
Although the latter is a rectified flow model, stochastic sampling along its probability flow can be shown to be equivalent to DDPM sampling \citet{lai2025principles}, and hence we may use it for DiffC. 
\citet{flamich2026stochastic} recently revisited the path-based encoding in DiffC from the stochastic coding perspective.
They observed that under the assumption of a perfect denoiser, the KL divergence between joint forward and reverse paths on any common discretization $s = t_0 < t_1 < \cdots < t_n = T$ equals the KL divergence between their endpoint marginals at $s$.
This observation means that encoding the final denoising step and the entire denoising path has roughly the same compression cost using stochastic codes.
However, encoding the path is now practically feasible.
\par
Beyond porting DiffC to large pretrained models, \citet{vonderfecht2025lossy} introduced several engineering contributions that bring encoding times down to below 10 seconds. 
They perform relative entropy coding in fixed-size chunks of approximately 16 bits of KL, splitting high-KL transitions across statistically independent subsets of dimensions, so that each chunk can be feasibly compressed with a stochastic code. 
Furthermore, they use a custom CUDA kernel for the stochastic code, which draws the candidate samples directly inside the kernel rather than materializing them in GPU memory. 
Finally, they cast the choice of denoising timestep schedule as a shortest path problem over the cost matrix of pairwise per-step KL divergences. 
As an ablation, they show that hard-coding the per-encoding-step KL budget across all images, rather than transmitting it as side information, has a negligible effect on the resulting rate-distortion curves.
%
\section{Approximate PPR}
\label{sec:approximate_ppr}
\par
As we mentioned in \Cref{sec:compressing_dp}, PPR encodes an arbitrary LDP mechanism $P_{M \mid X}$ using a coding distribution $Q_M$ with as short a codelength as possible. 
However, PPR's exactness can be a drawback: its runtime scales as $\norm*{\frac{dP_{M \mid X}}{dQ_M}}_\infty$, which can be infinite even when $\KLD{P_{M \mid X}}{Q_M}$ is finite.
We address this issue by developing an approximate variant of PPR whose output is no longer an exact sample from $P_{M \mid X}$ but is still suitably close, as measured by the total variation distance.
Furthermore, this variant will retain its close-to-optimal expected codelength and its average runtime scales as $\Oh(\exp(\KLD{P_{M \mid X}}{Q_M}))$.
\par
To achieve our goal, we adopt the deceptively simple strategy of \citet{flamich2024some}: we introduce a budget $n \in \Nats$ for the maximum number of samples PPR is allowed to examine.
If the algorithm exceeds this budget, we terminate it and return the best index found so far.
\begin{definition}[Step-limited Poisson Private Representation]
\label{def:step_limited_ppr}
Let $P_{M \mid X}, Q_M, r(m \mid x)$, $Z = \{(M_i, T_i)\}_{i = 1}^\infty$, ${V = \{V_i\}_{i = 1}^\infty}$ and $\zeta$ the arithmetic code constructed from a zeta distribution as in \Cref{def:ppr}. 
In particular, recall that $P_{M \mid X}$ is an $\epsilon$-LDP mechanism whose output we wish to encode.
Now, let $n \in \Nats$ and $\alpha > 1$. 
We define the step-limited PPR algorithm as follows:
\begin{enumerate}
\item The common randomness is  $Z = \{(M_i, T_i)\}_{i = 1}^\infty$.
\item For the encoder, we define the step-limited index
\begin{align}
\label{eq:approx_ppr_index}
K_n^\alpha(x, Z, V) = \argmin_{k \in [1:n]} \left\{V_k  \left(\frac{T_k}{r(M_k \mid x)}\right)^\alpha\right\}
\end{align}
Then, we set the encoder as
\begin{align}
\enc_n^\alpha(x, Z, V) = \zeta(K_n^\alpha(x, Z, V ))
\end{align}
\item The decoder is the same as in \Cref{def:ppr}:
\begin{align}
\dec(s, Z) = M_{\zeta^{-1}(s)} 
\end{align}
\end{enumerate}
\end{definition}
We see that by definition, the runtime of step-limited PPR is finite.
But how should we choose the budget $n$?
Besides proving that our algorithm retains PPR's LDP certificate and close-to-optimal codelength, the following result shows that choosing $n = \Oh(\exp(\KLD{P_{M \mid X}}{Q_M}))$ ensures that the algorithm's output distribution will be close to that of the original mechanism.
\begin{theorem}
\label{thm:approximate_ppr_properties}
Let $P_{M \mid X}, Q_M, K_n^\alpha, V$ and let $(Z, \enc, \dec)$ be the step-limited PPR algorithm as in \Cref{def:step_limited_ppr}.
Let $n \in \Nats$ be the step-limit and $\alpha > 1$ be the PPR parameter. 
Take ${\ell_\alpha(X) = \KLD{P_{M \mid X}}{Q_M} \!+\! \frac{\log 3.56}{\min\{(\alpha - 1)/2, 1\}}}$.
Then,
\begin{enumerate}
\item \textbf{LDP guarantee:} ${\enc_n^\alpha(x, Z, V)}$ is $2\alpha\epsilon$-LDP. 
\item \textbf{Codelength:} Akin to exact PPR, we have for all $x$:
\begin{align*}
\Exp[\abs{\enc_n^\alpha(x, Z, V)}]
\leq \ell_\alpha(x) + \log(\ell_\alpha(x) + 1) + 2
\end{align*}
\item \textbf{Approximation quality:} Let $\beta > 0$ and let $Q_{M \mid X}^{(n)}$ denote the distribution of $\dec(\enc_n^\alpha(x, Z, V), Z)$.
Then, choosing $n \geq 2^{\ell_\alpha(X) / \beta}$ guarantees that
\begin{align}
\TVD{Q^{(n)}_{M \mid X = x}}{P_{M \mid X=x}} \leq \beta
\end{align}
\end{enumerate}
\end{theorem}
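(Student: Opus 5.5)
We treat the three claims separately, reusing the structure of the exact PPR analysis of \citet{liu2024universal} wherever possible. The central tool is the standard reparametrisation of the selection rule in \Cref{eq:approx_ppr_index}. Write $\tilde T_k = V_k^{1/\alpha} T_k / r(M_k \mid x)$, so that $K_n^\alpha$ minimises $\tilde T_k$ over $k \in [1:n]$. The step-limited index is simply the exact PPR index restricted to the first $n$ points of the process. Equivalently, it is the minimiser of a monotone function of the marks over the truncated process $\{(M_k, T_k, V_k)\}_{k \le n}$.

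\textbf{LDP guarantee.} We follow the proof of Theorem 4.5 of \citet{liu2024universal}. For any fixed realisation of the common randomness $Z$ and any index $k \le n$, we write
\[
\Prob[K_n^\alpha(x,Z,V)=k \mid Z] = \Exp_V\Bigl[\prod_{j\le n, j\neq k}\Ind\bigl[\tilde T_j(x) > \tilde T_k(x)\bigr]\Bigr].
\]
Integrating out the independent exponentials $V_j$ gives
\[
\Prob[K=k\mid Z] = \int_0^\infty e^{-u}\prod_{j\neq k}\exp\Bigl(-u\,\tfrac{T_j^\alpha r(M_k\mid x)^\alpha}{T_k^\alpha r(M_j\mid x)^\alpha}\Bigr)\,du.
\]
This is a ratio of the form $w_k(x)/\sum_{j\le n} w_j(x)$ after substitution, with weights $w_j(x) = (r(M_j\mid x)/T_j)^\alpha$; this is the exponential-race identity. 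The $\epsilon$-LDP property of $P_{M\mid X}$ gives $r(m\mid x) \le e^{\epsilon} r(m\mid x')$ for all $m$. Hence each weight changes by a factor of at most $e^{\alpha\epsilon}$ when $x$ is replaced by $x'$. The ratio $w_k/\sum_j w_j$ therefore changes by a factor of at most $e^{2\alpha\epsilon}$, since the numerator and the denominator each contribute one factor. Averaging over $Z$ then yields $2\alpha\epsilon$-LDP of $K_n^\alpha$. Because $\zeta$ is injective, the same certificate holds for $\enc_n^\alpha$. Restricting the race to $n$ terms only shrinks the denominator's index set, so the bound is unaffected.

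\textbf{Codelength.} The exact PPR bound rests on a bound on $\Exp[\log K]$, namely $\Exp[\log K] \le \ell_\alpha(x)$, combined with the zeta-code overhead. It therefore suffices to show that truncation can only decrease $K$ stochastically. This is immediate: if the global argmin $K$ satisfies $K\le n$, then $K_n^\alpha = K$; otherwise $K_n^\alpha \le n < K$. So $K_n^\alpha \le K$ pointwise on the same randomness. Since the zeta codelength is increasing in $k$, we get $\Exp\abs{\enc_n^\alpha} \le \Exp\abs{\enc}$, and \Cref{eq:ppr_codelength} transfers verbatim.

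\textbf{Approximation quality.} We couple the exact and step-limited outputs on the same $(Z,V)$. They agree whenever $K\le n$, so
\[
\TVD{Q^{(n)}_{M\mid X=x}}{P_{M\mid X=x}} \le \Prob[K > n].
\]
Markov's inequality applied to $\log K$ gives
\[
\Prob[K>n] = \Prob[\log K > \log n] \le \Exp[\log K]/\log n.
\]
Using $\Exp[\log K]\le \ell_\alpha(x)$ from the codelength analysis of \citet{liu2024universal}, and choosing $n\ge 2^{\ell_\alpha(x)/\beta}$, we obtain a bound of at most $\beta$.

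\textbf{Main obstacle.} The main obstacle is confirming that the cited PPR analysis indeed provides $\Exp[\log K]\le \ell_\alpha(x)$ in exactly this form, rather than only the final codelength bound. If it does not, we would re-derive it via the standard argument: use $\Exp[\log K]\le \Exp[\log(T_K+1)]$-type estimates, apply Jensen's inequality to the Poisson-process density of the selected point, and track the $\alpha$-dependent constant $\log 3.56/\min\{(\alpha-1)/2,1\}$. A minor subtlety is also checking the race-ratio representation carefully enough to obtain exactly the factor $2\alpha\epsilon$.
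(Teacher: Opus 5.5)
Your proposal is correct and follows essentially the same route as the paper: the exponential-race probability is bounded by a factor $e^{2\alpha\epsilon}$ conditionally on $Z$, the codelength follows from $K_n^\alpha \le K$ pointwise, and the approximation bound comes from the coupling $\TVD{Q^{(n)}_{M \mid X = x}}{P_{M \mid X=x}} \le \Prob[K>n]$ plus Markov's inequality on $\log K$. The paper uses the marginally tighter $\log(n+1)$ in place of your $\log n$ there. The bound $\Exp[\log K]\le \ell_\alpha(x)$ that you flag as the main obstacle is exactly Theorem 4.3 of \citet{liu2024universal}, which the paper simply cites. One small slip: in your displayed integral the ratio in the exponent is inverted and should be $T_k^\alpha r(M_j\mid x)^\alpha/(T_j^\alpha r(M_k\mid x)^\alpha)$. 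Your stated weights $w_j(x)=(r(M_j\mid x)/T_j)^\alpha$ are nevertheless the right ones, and the $e^{2\alpha\epsilon}$ conclusion is unaffected.
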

\begin{proof}
\textbf{LDP guarantee.}
Let $x, x'$ be two inputs.
Since the mechanism $M$ is $\epsilon$-LDP by assumption, we have 
\begin{align}
e^{-\epsilon} r(m \mid x') \leq r(m \mid x) \leq e^{\epsilon} r(m \mid x') 
\end{align}
Next, by a standard fact about the argmin of scaled exponential random variables (see Section 3.3 of \citet{li2024channel}, for example), we have
\begin{align*}
\Prob[K_n^\alpha(x, Z, V) = k \mid Z] 
&= \frac{T_k^\alpha \cdot r(M_k \mid x)^{-\alpha}}{\sum_{j = 1}^n T_j^\alpha \cdot r(M_j \mid x)^{-\alpha}} \\
&\leq \frac{T_k^\alpha \cdot (e^{-\epsilon} r(M_k \mid x'))^{-\alpha}}{\sum_{j = 1}^n T_j^\alpha\! \cdot\! (e^{\epsilon}r(M_j \mid x'))^{-\alpha}} \\
&= e^{2\alpha\epsilon}\frac{T_k^\alpha \cdot r(M_k \mid x')^{-\alpha}}{\sum_{j = 1}^n T_j^\alpha \cdot r(M_j \mid x')^{-\alpha}} \\
&= e^{2\alpha\epsilon} \Prob[K_n^\alpha(x', Z, V) = k \mid Z] 
\end{align*}
as desired.
\par
\textbf{Codelength.}
By its definition $K_n^\alpha(x, z, v) \leq K(x, z, v)$, and hence we have that
\begin{align}
\Exp[\log K_n^\alpha(x, Z, V)] \leq \ell_\alpha(x)
\end{align}
Thus, using our arithmetic code $\zeta$ to encode $K_n^\alpha(x, Z, V)$, we get the desired result.
\par
\textbf{Approximation quality.}
We use a technique similar to that of \citet{flamich2024some}:
Let $K(X, Z, V)$ be the index of exact PPR defined in \Cref{eq:ppr_index}.
First, note that (e.g., using the technique from Section 4.3.2 of \citet{flamich2024data}):
\begin{align*}
\TVD{Q^{(n)}_{M \mid X = x}}{P_{M \mid X = x}} 
&\leq \Prob[K(X, Z, V) > k]
\end{align*}
By Theorem 4.3 of \citet{liu2024universal} we have for $\ell(X) = \KLD{P_{M \mid X}}{Q_M}$ that
\begin{align}
\Exp[\log K_n^\alpha(X, Z, V)] \leq \ell_\alpha(X) 
\end{align}
Since $\log K(X, Z, V) \geq 0$, we may apply Markov's inequality to obtain
\begin{align}
\Prob[K(X, Z, V) > n] \leq \frac{\ell_\alpha(X)}{\log(n + 1)}
\end{align}
Therefore, we see that
\begin{align}
\TVD{Q^{(n)}_{M \mid X = x}}{P_{M \mid X = x}} \leq \frac{\ell_\alpha(X)}{\log(n + 1)}
\end{align}
Thus, setting $n \geq 2^{\frac{\ell_\alpha(X)}{\beta}}$ yields the desired approximation guarantee.
\end{proof}
\section{Privatized Data Compression with Diffusion}
\label{sec:private_diffc}
\subsection{The Denoising Process as a Privacy Mechanism}
\label{sec:diff_priv}
Our most important high-level observation is that, \textit{the DDPM denoising process conditioned on the original image $x_0$ is a valid privacy mechanism}. 
Concretely, at each denoising timestep, we may view a sample from the conditional posterior given in \Cref{eq:diffusion_conditional_posterior} as an additive Gaussian privacy mechanism for $x_0$:
\begin{align}
\label{eq:privacy_mechanism}
M(x_0) \!=\! \mu_q(x_t, x_0; s, t) + \sigma(s, t)\, \eta,\quad 
\eta \sim \Normal(0, \mathrm{I})
\end{align}
We defer the discussion on the privacy guarantee of this mechanism to \Cref{sec:appA}. 
\par
Unfortunately, the Gaussian mechanism does not yield a \textit{pure} $\epsilon$-LDP certificate, and hence our \Cref{thm:approximate_ppr_properties} does not apply if we apply step-limited PPR to it.
To circumvent this issue, we moment-match Laplace distributions to the Gaussian targets and proposals at each time step and use those for denoising and compression instead. 
Thus, the privacy mechanism becomes
\begin{align}
\label{eq:privacy_mechanism_laplace}
M(x_0) = \mu_q(x_t, x_0; s, t) + b(s, t)\, \eta,\quad \eta \sim \Laplace(0, \mathrm{I})
\end{align}
where $b(s, t) = \sigma(s, t) / \sqrt{2}$. 
For a discussion of high-dimensional Laplace distribution on differential privacy, see~\citep{andres2013geo}. 

In what follows, we aim to provide per-pixel guarantees for image data.
Thus, $x^{(i)}_0$ will denote a pixel on an image, consisting of $C$ channels (e.g., $C=3$ for RGB), with $x^{(i,c)}_0$ denoting channel $c$.
We start with per-subpixel guarantees, which we also refer to as ``per-channel,'' from which we recover the per-pixel guarantees.
Following the steps in \citet[Theorem 3.6]{dwork2014algorithmic} applied to a single channel, the per-channel mechanism satisfies $\epsilon_\text{ch}$-DP where 
\begin{align}
\epsilon_\text{ch} = \frac{\gamma_s \sigma^2_{t\mid s} }{\sigma^2_t b(s,t)}\cdot \sup\abs{ x^{(i,c)}_0 - x'^{\,(i,c)}_0}\label{eq:subpixel_guarantee}
\end{align}
and the supremum is taken over all possible images. This admits the nicer form (see \Cref{sec:appA} for details):
\begin{align*}
\epsilon_\text{ch} = \sqrt{2(\text{SNR}(s)-\text{SNR}(t))} \sup{\vert x^{(i,c)}_0 - x'^{\,(i,c)}_0}\vert
\end{align*}
By basic composition over the $C$ channels of a pixel, the per-pixel $\epsilon_{t \rightarrow s}$-DP guarantee per step is
\begin{align*}
\epsilon_{t\rightarrow s} = C\sqrt{2(\text{SNR}(s)-\text{SNR}(t))} \sup{\vert x^{(i,c)}_0 - x'^{\,(i,c)}_0}\vert
\end{align*}
\subsection{Compressing the Privacy Mechanism}
In DiffC~\citep{theis2022lossy}, the output of each denoising step is not released directly, but compressed via a stochastic code; in our case, this is where we will apply the PPR code. 
Under PPR, at each step we compress the privacy mechanism $M_{t\rightarrow s}(x_0)$ and the compressed output inherits a privacy guarantee that depends on PPR's parameter $\alpha$ and the privacy parameters of the mechanism being compressed.
Note that in the works of \citet{theis2022lossy} and \citet{vonderfecht2025lossy}, the code they used is equivalent to our PPR code with the parameter $\alpha = \infty$, which achieves maximal compression efficiency but provides no privacy guarantee.
\par
However, as we explained in \Cref{sec:ppr}, exact PPR (that is, the budget is $n = \infty$) requires an upper bound on the density ratio between the target and the proposal distributions. In the Gaussian case of \Cref{eq:privacy_mechanism}, both distributions are Gaussians with the same variance
\begin{align}
\label{eq:denoising_distributions}
q = \Normal(\mu_q, \sigma^2 \mathrm{I}),\, p = \Normal(\mu_p, \sigma^2 \mathrm{I}),
\end{align}
whose density ratio is easily seen to be unbounded. 
Fortunately, with our moment-matched Laplace distributions (\Cref{eq:privacy_mechanism_laplace}) this is no longer an issue: the density ratio of two Laplace distributions with the same scale is always bounded. 
However, despite the density ratio no longer being an issue, step-limited PPR still provides a significant computational advantage compared to its exact variant.
Since we choose the step-budget $n$ ahead of time, we can draw the $n$ proposal samples and evaluate their density ratios in parallel on a GPU using a custom CUDA kernel (see \Cref{sec:custom_cuda_kernel}).
\par
Thus, we use step-limited PPR (\Cref{def:step_limited_ppr}). 
Combining it with the per-step, per-pixel guarantee derived in \Cref{sec:diff_priv}, the PPR code satisfies a $2\alpha\epsilon_{t\rightarrow s}$-LDP per-pixel guarantee at each step. 
By sequential composition, the total per-pixel privacy certificate across $n$ encoding steps is:
\begin{align*}
\epsilon_\text{total} = 2 \alpha \sum_{i=1}^n \epsilon_{t_i\rightarrow t_{i-1}}
\end{align*}

We show experimentally that, for compression to the same final timestep, replacing the Gaussian channel with the moment-matched Laplace channel and $\alpha = \infty$ with $\alpha = 2$ in step-limited PPR has a negligible effect on the downstream task, while increasing the bitrate by a factor of $\approx 2$; see the bottom left panel of \Cref{fig:comparison_plot}.
The factor $2$ increase can be in large part explained by comparing the relative entropies between the original Gaussians involved and their moment-matched Laplace counterparts, see \Cref{sec:factor_two_increase_explanation}.

\section{Experiments}
\label{sec:experiments}
\begin{figure*}[t]
\centering
\includegraphics{images/tikz/comparison_plots.tikz}
\ref*{legend:methods}
\caption{%
Experimental results on training an image classifier from a dataset of privatized and compressed CIFAR-10 images.
Note that Gaussian DiffC \textbf{does not} have an LDP certificate and is included only for illustration.
The left and middle columns also include the classification accuracy from the original, unprivatized data for comparison.
In each case, the markers are connected by lines \textbf{in order of increasing $\epsilon$}, where $\epsilon$ is the method's LDP guarantee.
\textbf{Left column:} The rate-utility tradeoff, as measured by bits per pixel (bpp) and test classification accuracy, respectively.
\textbf{Middle column:} Privacy-utility tradeoff, where the method's LDP certificate measures privacy.
\textbf{Right column:} Rate-privacy tradeoff.
\textbf{Top row:} comparisons among 1) the privatize-then-compress method using 2) Gaussian DiffC and 3) DP-DiPP.
\textbf{Bottom row:} Same comparisons as top row with the privatize-then-compress method omitted.
}
\label{fig:comparison_plot}
\end{figure*}
We compare the privacy-utility-compression tradeoff of DP-DiPP against the privatize-then-compress baseline and DiffC \citep{theis2022lossy} for training an image classifier on privatized and compressed CIFAR-10 images, taking final classification accuracy as our utility function.

Concretely, we add calibrated Laplace noise to the image and compress the result with the \texttt{PNG} lossless image compression codec. 
Both methods satisfy per-pixel pure $\epsilon$-DP, so their utility and compression rate can be compared at the same privacy level. We also compare against the original DiffC to understand how replacing the Gaussian channel with the Laplace channel as well as using $\alpha = \infty$ versus $\alpha = 2$ for PPR affects utility and coding cost.
\subsection{Privatization}
\par
For privatization, our encoding pipeline modifies DiffC in two ways: we sample from the moment-matched Laplace channel ($b = \sigma/\sqrt{2}$), and we compress this channel with PPR using $\alpha = 2$.
At each encoding step, our CUDA kernel samples from $z_i \overset{\text{iid}}{\sim}\Laplace(0, 1)$ and computes the logarithm of density ratios evaluated at those samples as  $\sum_i(\abs{z_i} - \abs{z_i - \mu_i})$ (except the constant term), differently from the implementation in \citet{vonderfecht2025lossy} for PFR, which is $\sum_iz_i\mu_i$ (here $\mu_i = (\mu_q-\mu_p)/b$). 
The encoder then uses these precomputed density ratios to run step-limited PPR (\Cref{def:step_limited_ppr}). 
To decode, the receiver, sharing the same random seed, retrieves the selected sample and reconstructs $x_s = b \cdot z + \mu_p$ using the diffusion model's noise prediction
\begin{align*}
    \mu_p(x_t, \eta_\theta; s, t) = \frac{1}{\gamma_{t \mid s}}\left(x_t - \frac{\sigma_{t \mid s}^2}{\sigma_t}\eta_\theta(x_t, t)\right)
\end{align*}
Each encoded image produces two outputs. The \emph{noisy reconstruction} is the raw output from the final encoding step; it lies at an intermediate diffusion timestep and contains visible noise. The \emph{denoised reconstruction} is obtained by running DDPM from this final timestep to $t = 0$.

The encoding schedule is chosen by solving a resource-constrained shortest-path problem over the timesteps, extending the method used in \citet{vonderfecht2025lossy}, which does not have a privacy constraint. The aim is to find the schedule that minimizes the total coding cost subject to the privacy budget $\epsilon_\text{total}$ for each final timestep $T_\text{final}$.
Then, setting $\Delta_{\text{ch}} = {\sup\abs{ x^{(i,c)}_0 - x'^{\,(i,c)}_0}} = 2$, encoding each denoising step consumes 
\begin{align*}
\epsilon_\text{PPR}{(t\rightarrow s)} = 6\alpha\sqrt{2 (\text{SNR}(s) - \text{SNR}(t))} \Delta_{\text{ch}}
\end{align*}
units of per-pixel privacy budget.

For the Laplace noising baseline, we add independent $\Laplace(0, \Delta'_\text{ch}/(\epsilon/3))$ noise to each channel where $\Delta'_\text{ch} = 255$, clamp to $[0, 255]$, and compress losslessly with \texttt{PNG}.
\subsection{Classification}
The pretrained DDPM model \citep{ho2020denoising} used as the backbone of our compression schemes has been trained on the CIFAR-10 training set.
Hence, to prevent data leakage, we use the CIFAR-10 test set, split evenly into 5k training and 5k evaluation images to train and evaluate our classifier.

To measure how privacy affects downstream utility, we follow the protocol of \citet{cao2026ldpslicing} and train a ResNet-56 classifier directly on the privatized images. The classifier never sees clean data; it is trained in the noisy domain using the provided image labels, using the same hyperparameters as \citet{cao2026ldpslicing} to enable direct comparison.
\subsection{The DiffC Baseline}
\par
We compare DP-DiPP to the original Gaussian DiffC \citep{vonderfecht2025lossy}, which is equivalent to our step-limited PPR algorithm with $\alpha = \infty$.
Note that Gaussian noise lacks a pure LDP guarantee, and so does using PPR with $\alpha = \infty$;
we include it as an ablation to show that DP-DiPP's performance does not degrade significantly when using moment-matched Laplace distributions instead of the original Gaussians.
To ensure a fair comparison between DiffC and DP-DiPP, we first solve the resource-constrained shortest path problem for the DP-DiPP mechanism at each privacy level $ \epsilon$. 
After we decide on a final timestep $T_\text{final}$, we find the shortest path for DiffC that leads to $T_\text{final}$. 

%
\subsection{Empirical Findings}
\par
Our experimental results are shown in \Cref{fig:comparison_plot}.
Each point on the curve represents a final compression timestep, after which each method uses the DDPM denoiser to complete denoising the images.
For our experiments, we tuned each method to attain a per-pixel $\epsilon$-LDP certificate (or equivalent in the case of DiffC) for $\epsilon \in \{1, 4, 16, 64, 256\}$.
For the step-limited PPR code in DP-DiPP, we set $\alpha = 2$.
\par
First and foremost, our findings validate our method: DP-DiPP's joint privatization and compression significantly outperform the privatize-then-compress baseline, as shown in the top row of \Cref{fig:comparison_plot}.
In particular, DP-DiPP achieves the same classification accuracy (top left panel) and the same LDP guarantee (top right panel) using less than 3-11\% of the bits that the baseline method needs, yielding a 10-30$\times$ improvement in the rate!
\par
Second, in the bottom row of \Cref{fig:comparison_plot}, comparison of DiffC and DP-DiPP shows that using moment-matched Laplace distributions instead of Gaussians and a strict privacy parameter of $\alpha=2$ introduces a roughly 2-times penalty in the rate (bottom right), while the utility is largely unaffected.
\section{Discussion and Future Directions}
We introduced an approximate version of PPR and showed that it retains an LDP guarantee and has good approximation properties.
Then, we applied it to diffusion-based compression, and showed that using Laplace instead of Gaussian denoising distributions can significantly improve the formal privacy guarantee while retaining empirical performance.
We applied our method to image compression on CIFAR-10 and showed that our joint compression method significantly outperforms the privatize-then-compress baseline.
\par
We have only dealt with pure $\epsilon$-LDP mechanisms.
Thus, a future theoretical direction could be to consider whether it is possible to construct approximate versions of PPR from mechanisms that are $(\epsilon,\delta)$-LDP instead.
\section*{Impact Statement}
This paper presents work aimed at advancing the field of Machine Learning. There are many potential societal consequences of our work, none of which we feel must be specifically highlighted here.

\section*{Acknowledgements}
The authors acknowledge financial support from Imperial College London through an Imperial College Research Fellowship grant awarded to GF, and from the UKRI for projects AI-R (ERC Consolidator Grant, EP/X030806/1) and INFORMED-AI (EP/Y028732/1).

\bibliography{references}
\bibliographystyle{icml2026}

\newpage
\appendix
\crefalias{section}{appendix}
\onecolumn

\section{Notes on Privacy Guarantee of Gaussian Diffusion}
\label{sec:appA}
We can investigate the privacy guarantee for the Gaussian diffusion process, continuing from \Cref{eq:privacy_mechanism}. 
The $\ell_2$ sensitivity of this mechanism is 
\begin{align*}
\Delta_{t\rightarrow s} 
&= \frac{\gamma_s \sigma^2_{t\mid s}}{\sigma^2_t} \sup_{x_0, x'_0 \in \XSpace} \Vert x_0 - x'_0 \Vert_2\\
&= \frac{\gamma_s \sigma^2_{t\mid s}}{\sigma^2_t} D_\XSpace.\\
\end{align*}
and the $\Delta/\sigma$ ratio simplifies to 
\begin{align*}
\frac{\Delta_{t\rightarrow s}}{\sigma(s, t)}
&= \frac{\gamma_s \sigma^2_{t\mid s}/\sigma^2_t}{\sigma_{t\mid s} \sigma_s / \sigma_t} D_\XSpace\\
&= \frac{\gamma_s \sigma_{t\mid s}}{\sigma_t \sigma_s} D_\XSpace\\
&=\sqrt{\left(\frac{\gamma_s}{\sigma_s}\right)^2 - \left(\frac{\gamma_t}{\sigma_t}\right)^2} D_\XSpace\\
&=\sqrt{\text{SNR}(s) - \text{SNR}(t)} D_\XSpace.
\end{align*}

The $(\epsilon, \delta)$-differential privacy of this mechanism at each time step can be computed tightly using the result of \citet[Theorem 8]{balle2018improving}. The privacy guarantee of the full multi-step mechanism, which is the composition of per-step mechanisms, can then be obtained via an adaptive composition theorem such as \citet[Theorem 3.5]{pmlr-v37-kairouz15}.
\section{Custom CUDA Kernel for PPR}
\label{sec:custom_cuda_kernel}

Different from the implementation released by the authors of \citet{vonderfecht2025lossy}, to sample from a moment matched laplace distribution at each timestep, we first generate laplace samples with mean $0$ and scale $1$. We draw these samples using inverse transform sampling:

\begin{minted}[
  bgcolor=gray!8,
  linenos,
  breaklines,
  fontsize=\small
]{cpp}
__device__ float curand_laplace_standard(curandState* state) {
    float u = curand_uniform(state);
    float half_minus = u - 0.5f;
    float sign_val = (half_minus >= 0.0f) ? 1.0f : -1.0f;
    float abs_val = fabsf(half_minus);
    float inner = fmaxf(1.0f - 2.0f * abs_val, 1e-10f);
    return -sign_val * logf(inner);  // Laplace(0, 1)
}
\end{minted}

Also differently from \citet{vonderfecht2025lossy}, the log density ratio ($\log w(z)$) between target $\mathcal{L}(\mu, 1)$ (where $\mu = \frac{\mu_q - \mu_p}{b}$) and proposal $\mathcal{L}(0, 1)$ analytically becomes
\begin{align*}
\log w(z) = \abs{z} - \abs{z-\mu}
\end{align*}
thus we generate these ratios as:
\begin{minted}[
  bgcolor=gray!8,
  linenos,
  breaklines,
  fontsize=\small
]{cpp}
    float log_w_value = 0.0f;
    for (int i = 0; i < dim; i++) {
        float z = curand_laplace_standard(&state);
        log_w_value += fabsf(z) - fabsf(z - mu_q[i]);
    }
\end{minted}

We decide on the sample returned by step limited PPR using:
\begin{minted}[
  bgcolor=gray!8,
  linenos,
  breaklines,
  fontsize=\small
]{python}
    t = cp.random.exponential(scale=1.0, size=K)
    v = cp.random.exponential(scale=1.0, size=K)
    
    s = alpha * (log_cumsum_t - log_w) + log_v

    winning_seed = cp.argmin(s).item()
    sample = generate_sample(dim, shared_seed, winning_seed)
\end{minted}

We note that we can convert samples from $Z_i \overset{\text{iid}}{\sim} \mathcal{L}(\frac{\mu_q - \mu_p}{b}, 1)$ to $X_i \overset{\text{iid}}{\sim}\mathcal{L}(\mu_q, b)$ by
\begin{align*}
X_i = Z_i \cdot b + \mu_p,
\end{align*}
where both $\mu_p$ and $b$ are available to both the encoder and decoder. This is implemented in the code as
\begin{minted}[
  bgcolor=gray!8,
  linenos,
  breaklines,
  fontsize=\small
]{python}
            mu = ((qmu[i] - pmu[i]) / b).flatten().detach().cpu().numpy()
            sample, cs, dkl = ch_sim.encode(mu, manual_dkl=manual_dkl[si], seed=si)
            seeds_all[i].append(cs); dkl_all[i].append(dkl)
            xl[i] = torch.tensor(sample).reshape(xl[i].shape).to(dev).to(dt) * b + pmu[i]
\end{minted}

\section{Explanation of the Factor Two Increase in Bitrate after Moment Matching}
\label{sec:factor_two_increase_explanation}
We illustrate the phenomenon for one-dimensional distributions.
As the distributions involved in practice are fully factorised dimensionwise, the conclusion carries over by the chain rule of relative entropies.
\par
Thus, consider now
\begin{gather*}
q = \Normal(\mu_q, \sigma^2),\quad p = \Normal(\mu_p, \sigma^2), \\
q' = \Laplace(\mu_q, \sigma / \sqrt{2}),\quad p' = \Laplace(\mu_p, \sigma / \sqrt{2})
\end{gather*}
where $q, p$ represent one dimension of the conditional and unconditional denoising distributions from \Cref{eq:denoising_distributions}, and $q', p'$ are their moment-matched counterparts.
For convenience, let $\delta = \frac{\abs{\mu_p - \mu_q}}{\sigma}$ be the standardised absolute mean deviation.
Then, using the well-known formulae for the relative entropy of two Gaussian and two Laplace distributions, we get
\begin{align}
\KLD{q}{p} \cdot \ln 2 &= \frac{\delta^2}{2} \label{eq:gauss_kl}\\
\KLD{q'}{p'} \cdot \ln 2 &= \sqrt{2}\delta - 1 + e^{-\sqrt{2}\delta}  \\
&= \sqrt{2}\delta - 1 + 1 - \sqrt{2}\delta + \delta^2 + \Oh(\delta^3) \tag{Taylor expansion around $0$} \\
&= \delta^2 + \Oh(\delta^3) \label{eq:taylor_simple_lap_kl}
\end{align}
Comparing \Cref{eq:gauss_kl,eq:taylor_simple_lap_kl}, we see that for small $\delta$, we indeed have $\KLD{q'}{p'} \approx 2  \cdot \KLD{q}{p}$, as we observe in practice.
\begin{figure}[t]
\centering
\includegraphics[scale=.8]{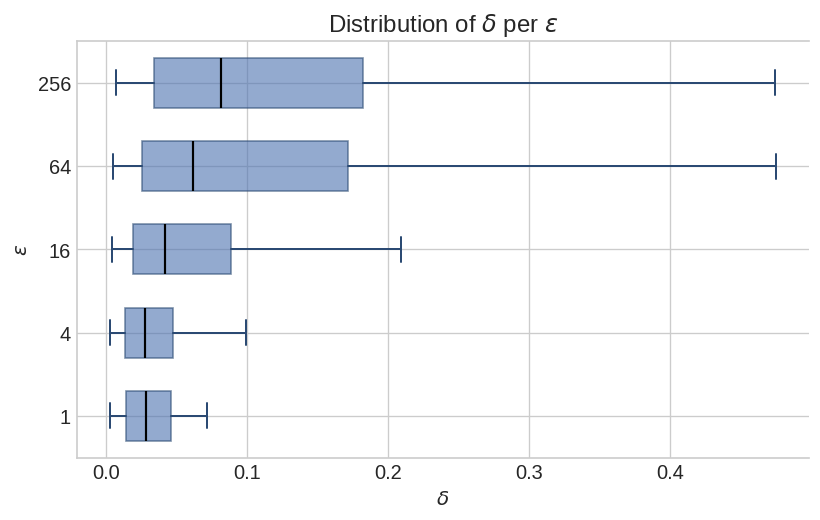}
\caption{Per $\epsilon$ distribution of the standardised absolute mean deviation $\delta$. Each box pools $\delta$ over all time steps of the PPR schedule, all 128 test images, and all $32 \times 32 \times 3$ latent dimensions. Box = interquartile range, line = median, whiskers = 5th/95th percentiles.}
\label{fig:delta_histogram} 
\end{figure}
We corroborate this empirically, by studying the distribution of $\delta$ across the dimensions of our diffusion models for different $\epsilon$-LDP settings in \Cref{fig:delta_histogram}.
We see that for all settings, the distribution of $\delta$ is heavily skewed towards $0$, with barely any values being close even to $1$; thus our Taylor approximation is roughly valid.

\end{document}